\documentclass[a4paper,11pt]{article}

\usepackage{amsmath}
\usepackage{amssymb}
\usepackage{amsthm}
\usepackage{array}
\usepackage{amscd}
\usepackage{cases}
\usepackage{bm}
\usepackage[all]{xy}
\usepackage{authblk}
\usepackage{cancel}
\usepackage{ulem}
\usepackage{mathrsfs}

\theoremstyle{plain}
\newtheorem{thm}{Theorem}[section]
\newtheorem{prop}[thm]{Proposition}

\theoremstyle{definition}

\theoremstyle{definition}
\newtheorem{rem}{Remark}

\theoremstyle{remark}

\numberwithin{equation}{section}

\headheight=0mm 
\headsep=10mm 
\topmargin=-15mm 
\oddsidemargin=-3mm 
\evensidemargin=-3mm 
\textheight=230mm 
\textwidth=165mm 
 

\title{A $q$-analogue of the matrix fifth Painlev\'e system}
\author{Hiroshi KAWAKAMI\thanks{\texttt{kawakami@math.aoyama.ac.jp}}}
\affil{Graduate School of Mathematical Sciences, The University of Tokyo,
3-8-1 Komaba, Meguro-ku, Tokyo 153-8914, Japan.}
\date{}

\begin{document}
\maketitle

\begin{abstract}
We consider a degeneration of the $q$-matrix sixth Painlev\'e system.
As a result, we obtain a system of non-linear $q$-difference equations,
which describes a deformation of a certain non-Fuchsian linear $q$-difference system.
We define the spectral type for non-Fuchsian $q$-difference systems and
characterize the associated linear problem in terms of the spectral type.
We also consider a continuous limit of the non-linear $q$-difference system
and show that the resulting system of non-linear differential equations coincides with the matrix fifth Painlev\'e system.
\end{abstract}

\paragraph{Mathematics Subject Classifications (2010).}34M55, 34M56, 33E17, 39A13
\paragraph{Key words.}$q$-difference equation, connection-preserving deformation, isomonodromic deformation,
Painlev\'e-type equation, integrable system.

\section{Introduction}\label{sec:intro}

The Painlev\'e equations are non-linear second order ordinary differential equations that define novel transcendental functions.
Historically, the Painlev\'e equations were classified into six equations.
We refer to them as $P_{\mathrm{I}}, P_{\mathrm{II}}, \ldots,P_{\mathrm{VI}}$.
The sixth Painlev\'e equation $P_{\mathrm{VI}}$ serves as the ``source'' from which all the other Painlev\'e equations can be derived through degeneration processes.

Since the 1990s, various generalizations of the Painlev\'e equations have been proposed in the literature,
such as discretizations, higher-dimensional analogues, quantizations, and so on.

Recently, Painlev\'e-type differential equations with four-dimensional phase space have been classified
from the perspective of isomonodromic deformations of linear differential equations~\cite{HKNS, K3, K4, K5}.
This series of studies shows that,
in the four-dimensional case, there exist four ``sources'' as extensions of the sixth Painlev\'e equation.
Namely, they are
\begin{itemize}
\item the Garnier system~\cite{G}, which is a classically known multivariate extension of $P_{\mathrm{VI}}$,

\item the Fuji-Suzuki-Tsuda system~\cite{FS1, Ts2014}, which is an extension of $P_{\mathrm{VI}}$ with the affine Weyl group symmetry of type $A$,

\item the Sasano system~\cite{Ss}, which is an extension of $P_{\mathrm{VI}}$ with the affine Weyl group symmetry of type $D$,

\item the matrix sixth Painlev\'e system~\cite{B2, HKNS}, which is a non-abelian extension
of $P_{\mathrm{VI}}$.
\end{itemize}
Note that each of the four equations has its extensions defined in arbitrary even dimensions.
These four families are expected to have an impact on fields such as integrable systems, special functions, and so on.

On the other hand,
Sakai~\cite{Sak2001} established an algebro-geometric theory which provides a comprehensive understanding of two-dimensional (or second order) Painlev\'e equations.
According to Sakai's theory, when the phase spaces are two-dimensional, 
the discrete Painlev\'e equations are more fundamental.
Roughly speaking, by classifying a certain kind of rational surfaces, 22 different surfaces are obtained.
From the discrete symmetry of each surface, a discrete dynamical system (a system of difference equations) is generated.
The theory classifies these discrete Painlev\'e equations into three types:
additive difference, multiplicative difference ($q$-difference), and elliptic difference equations.
The Painlev\'e (differential) equations are understood through the continuous limit of these discrete Painlev\'e equations.
In this sense, we can say that the discrete Painlev\'e equations are more fundamental than the Painlev\'e differential equations.

Our aim is, inspired by the two-dimensional case,
to construct a unified framework for discrete Painlev\'e-type equations in higher dimensions.
However, it is difficult to classify algebraic varieties when the phase spaces have four or more dimensions.
Instead, from the standpoint of the classification theory (by Katz~\cite{Katz1995} and Oshima~\cite{Os}) of linear differential equations
and the isomonodromic/connection-preserving deformation theory,
we would like to develop a framework for higher-dimensional Painlev\'e-type equations that involves discrete Painlev\'e-type equations.

In \cite{K7}, we have defined an equivalence relation between spectral types of linear differential equations
(that is, those can be transformed into each other by M\"obius transformations, the Harnad dual, or certain scalar gauge transformations are equivalent)
and shown that there is a tree structure among the equivalence classes including differential equations without continuous deformations.
The tree structure of equivalence classes
corresponds to the additive surfaces in Sakai's list.
As a next step, we investigate multiplicative difference Painlev\'e-type equations in higher dimensions.

Among the four families mentioned above,
$q$-analogues of the Garnier systems, the Fuji-Suzuki-Tsuda systems, and the Sasano systems
have been obtained and studied by several authors~\cite{Sak2005, Sz, Ts2010, Ma}.
Recently, a $q$-analogue of the matrix sixth Painlev\'e system, which we call the $q$-matrix $P_{\mathrm{VI}}$, has been obtained~\cite{K6}.

In this paper we investigate a degeneration of the $q$-matrix $P_{\mathrm{VI}}$
with the aim of constructing a degeneration scheme for higher dimensional $q$-difference Painlev\'e-type equations.
As a result, a system of non-linear $q$-difference equations is obtained,
which corresponds to the matrix fifth Painlev\'e system in the continuous limit.
We tentatively denote the non-linear system by the $q$-matrix $P_{\mathrm{V}}$.
The $q$-matrix $P_{\mathrm{V}}$ is expressed as a deformation equation of a non-Fuchsian linear $q$-difference equation.

This paper is organized as follows.
In Section~\ref{sec:LqDE} we describe how to construct formal solutions to linear $q$-difference systems.
We also give the definition of spectral types for non-Fuchsian linear $q$-difference systems.
In Section~\ref{sec:q-matPVI} we review the $q$-matrix $P_{\mathrm{VI}}$.
In Section~\ref{sec:degeneration} we consider a degeneration of the $q$-matrix $P_{\mathrm{VI}}$.
In Section~\ref{sec:CL} we consider a continuous limit of the $q$-matrix $P_{\mathrm{V}}$ obtained in Section~\ref{sec:degeneration}.
The appendix is devoted to a brief description of the matrix fifth Painlev\'e system.

\bigskip 

\noindent 
\textbf{Acknowledgements} 

\noindent 
This work was supported by JSPS KAKENHI Grant Number JP20K03705
and the Research Institute for Mathematical Sciences, an International Joint Usage/Research Center located in Kyoto University.

\section{Linear $q$-difference systems}\label{sec:LqDE}
In this section, we collect some facts about linear $q$-difference systems.
The formal normal form is used
to define the spectral type for non-Fuchsian systems.

\subsection{Formal normal form: Fuchsian case}\label{sec:formal_reduction_Fuchsian}

Let $q$ be a complex number satisfying $0 < |q| < 1$.
Consider a linear $q$-difference system with polynomial coefficients
\begin{equation}\label{eq:Fuchsian_system}
Y(qx)=A(x)Y(x), \quad A(x)=A_0+A_1x+\cdots+A_Nx^N
\end{equation}
where $A_j \in M_m(\mathbb{C})$ and $A_0, A_N \ne O$.
If $A_0$ and $A_N$ are both invertible, then the system \eqref{eq:Fuchsian_system} is said to be \textit{Fuchsian}.
For simplicity we assume that $A_0$ and $A_N$ are diagonalizable.
In this subsection we outline the procedure to transform the given Fuchsian system into its formal normal forms
at $x=0$ and $x=\infty$.

We use the following well-known fact from linear algebra.
We denote the set of all eigenvalues of a matrix $A$ by $\mathrm{Sp}(A)$.
\begin{prop}\label{thm:Sylvester_eq}
Let $A \in M_m(\mathbb{C})$ and $B \in M_n(\mathbb{C})$.
Then the linear map
\begin{equation}
\varphi : M_{m,n}(\mathbb{C}) \to M_{m,n}(\mathbb{C}), \quad
\varphi(X)=AX-XB
\end{equation}
is an isomorphism of vector spaces if and only if $\mathrm{Sp}(A) \cap \mathrm{Sp}(B)=\emptyset$.
\end{prop}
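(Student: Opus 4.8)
The plan is to exploit the fact that $\varphi$ is an endomorphism of the finite-dimensional space $M_{m,n}(\mathbb{C})$, so that $\varphi$ is an isomorphism if and only if it is injective, i.e. $\ker\varphi = 0$. I would then prove the two implications separately, phrasing the statement as: $\varphi$ fails to be injective if and only if $\mathrm{Sp}(A)\cap\mathrm{Sp}(B)\neq\emptyset$.

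For the necessity direction (a common eigenvalue forces a kernel), I would argue by explicit construction. Suppose $\lambda\in\mathrm{Sp}(A)\cap\mathrm{Sp}(B)$. Pick a right eigenvector $u\in\mathbb{C}^m\setminus\{0\}$ of $A$, so $Au=\lambda u$, and a right eigenvector $w\in\mathbb{C}^n\setminus\{0\}$ of $B^{T}$, so $w^{T}B=\lambda w^{T}$ (here I use that $B$ and $B^{T}$ share the same spectrum). Setting $X:=u\,w^{T}\in M_{m,n}(\mathbb{C})$, which is nonzero since $u,w\neq 0$, a direct computation gives
\begin{equation}
\varphi(X)=A\,u\,w^{T}-u\,w^{T}B=\lambda\,u\,w^{T}-\lambda\,u\,w^{T}=O,
\end{equation}
so $X\in\ker\varphi$ and $\varphi$ is not injective.

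For the sufficiency direction (disjoint spectra force injectivity), my plan is a polynomial argument based on the Cayley--Hamilton theorem. Let $p(t)$ denote the characteristic polynomial of $B$, so that $p(B)=O$. Because the roots of $p$ are exactly the eigenvalues of $B$, none of which lie in $\mathrm{Sp}(A)$, the matrix $p(A)$ has eigenvalues $p(\alpha)$ with $\alpha\in\mathrm{Sp}(A)$, all nonzero; hence $p(A)$ is invertible. Now suppose $X\in\ker\varphi$, i.e. $AX=XB$. An immediate induction gives $A^{k}X=XB^{k}$ for every $k\geq 0$, and therefore $f(A)X=X\,f(B)$ for any polynomial $f$. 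Taking $f=p$ yields $p(A)\,X=X\,p(B)=O$, and multiplying by $p(A)^{-1}$ forces $X=O$. Thus $\ker\varphi=0$ and, by the dimension count, $\varphi$ is an isomorphism.

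I do not anticipate a genuine obstacle here, as both implications are elementary; the only point requiring care is the invertibility of $p(A)$, which is the crux of the sufficiency direction and relies precisely on the disjointness of the spectra. An alternative route would be to vectorize the equation and identify $\varphi$ with the Kronecker sum $I_n\otimes A - B^{T}\otimes I_m$, whose eigenvalues are the differences $\alpha-\beta$ with $\alpha\in\mathrm{Sp}(A)$, $\beta\in\mathrm{Sp}(B)$; invertibility then amounts to $0\notin\{\alpha-\beta\}$, i.e. $\mathrm{Sp}(A)\cap\mathrm{Sp}(B)=\emptyset$. I prefer the polynomial argument above because it avoids invoking the eigenvalue structure of Kronecker sums and keeps the proof self-contained.
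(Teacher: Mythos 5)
Your proof is correct and complete. Note that the paper itself offers no proof of this proposition: it is stated as ``a well-known fact from linear algebra'' and used as a black box, so there is no argument in the paper to compare against. Both of your implications are sound: the rank-one construction $X=u\,w^{T}$ from a right eigenvector of $A$ and a left eigenvector of $B$ for a common eigenvalue $\lambda$ correctly exhibits a nonzero element of $\ker\varphi$, and the Cayley--Hamilton argument (from $AX=XB$ deduce $p(A)X=Xp(B)=O$ with $p$ the characteristic polynomial of $B$, then invert $p(A)$ using the spectral mapping theorem and disjointness of the spectra) correctly establishes injectivity, which suffices by finite-dimensionality. The only minor point worth making explicit is the spectral mapping fact $\mathrm{Sp}(p(A))=\{p(\alpha):\alpha\in\mathrm{Sp}(A)\}$, which you invoke to conclude $p(A)$ is invertible; this is standard (e.g.\ by triangularizing $A$ over $\mathbb{C}$). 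Your alternative via the Kronecker sum $I_n\otimes A-B^{T}\otimes I_m$ would also work and is the other standard route.
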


First we explain the formal normal form at $x=0$.
For the convenience of later discussion, we consider a system of the following form:
\begin{equation}\label{eq:original_system}
Y(qx)=A(x)Y(x), \quad
A(x)=x^r(A_0+A_1x+A_2x^2+\cdots)
\end{equation}
where $A(x)$ is an $m \times m$ formal power series.
Here $A_0$ is invertible and diagonalizable.
$r$ is a non-negative integer.
Let the eigenvalues of $A_0$ be $\theta_1, \ldots, \theta_m$.
We also assume that the system is \textit{non-resonant},
that is, for any $i, j$
\begin{equation}
\theta_j / \theta_i \notin q^{\mathbb{Z}_{\ge 1}}=\{ q^n \mid n \in \mathbb{Z}_{\ge 1} \}.
\end{equation}

Let $P(x)=\sum_{n=0}^\infty P_nx^n$ be an $m \times m$ formal power series with $P_0=I$.
The substitution $Y(x)=P(x)Z(x)$ yields
\begin{equation}
Z(qx)=P(qx)^{-1}A(x)P(x)Z(x).
\end{equation}
We can choose the matrix $P(x)$ so that
\begin{equation}\label{eq:FNF_Fuchs0}
P(qx)^{-1}A(x)P(x)=x^rA_0.
\end{equation}
The matrix $P(x)$ can be constructed as follows.
The equation \eqref{eq:FNF_Fuchs0} can be written as
\begin{equation}
(A_0+A_1x+A_2x^2+\cdots)(I+P_1x+P_2x^2+\cdots)=(I+qP_1x+q^2P_2x^2+\cdots)A_0.
\end{equation}
Equating the coefficients of $x^n \ (n \ge 1)$ on both sides, we obtain
\begin{equation}\label{eq:determine_Pn}
A_0P_n-P_n(q^nA_0)=-\sum_{k=0}^{n-1}A_{n-k}P_k.
\end{equation}
If the coefficient matrices $P_1, \ldots, P_{n-1}$ are determined,
then the equation \eqref{eq:determine_Pn} uniquely determines $P_n$ by Proposition~\ref{thm:Sylvester_eq} and non-resonant condition.
In this way, the matrix $P(x)$ is constructed inductively.
Then the matrix $x^rA_0$ is the formal normal form of \eqref{eq:original_system} in this case.

The construction at $x=\infty$ is almost the same.
Assume that
\begin{equation}
A(x)=x^r(A_0+A_1x^{-1}+A_2x^{-2}+\cdots)
\end{equation}
where $A_0$ is invertible, $r \in \mathbb{Z}_{\ge 0}$.
We can construct the transformation matrix $P(x)=\sum_{n=0}^{\infty}P_nx^{-n}$ at $x=\infty$ such that
$P(qx)^{-1}A(x)P(x)=x^r A_0$ holds in a similar way.

\subsection{Formal normal form: non-Fuchsian case}

In the case that $A_0$ or $A_N$ of \eqref{eq:Fuchsian_system} is not invertible, the construction of the formal normal form can be modified as follows.
Consider at $x=0$
\begin{equation}\label{eq:original_system_non-Fuchsian}
Y(qx)=A(x)Y(x), \quad
A(x)=x^r(A_0+A_1x+A_2x^2+\cdots).
\end{equation}
Here we assume that
\begin{equation}\label{eq:non-Fuchsian_A0}
A_0=\mathrm{diag}(\theta_1, \ldots, \theta_s, 0, \dots, 0)=\Theta \oplus O_{m-s},
\end{equation}
where for any $i, j$
\begin{equation}\label{eq:non-resonant_non-Fuchsian}
\theta_i \ne 0, \quad \theta_j / \theta_i \notin q^{\mathbb{Z}_{\ge 1}}.
\end{equation}
Let $P(x)=\sum_{n=0}^\infty P_nx^n$ be an $m \times m$ formal power series with $P_0=I$.
We have $Z(qx)=P(qx)^{-1}A(x)P(x)Z(x)$ by the substitution $Y(x)=P(x)Z(x)$.
Set
\begin{equation}
B(x):=P(qx)^{-1}A(x)P(x)=x^r\sum_{n=0}^\infty B_nx^n.
\end{equation}
From the coefficients of $x^0$ in $A(x)P(x)=P(qx)B(x)$,
we have $B_0=A_0$.
Then the coefficients of $x^n \ (n \ge 1)$ gives the following relation:
\begin{align}
B_n=A_n+\sum_{j=1}^{n-1}(A_{n-j} P_j-q^j P_jB_{n-j})+A_0P_n-P_n(q^nA_0).
\end{align}
Set
\begin{equation}
C_n
=
\begin{pmatrix}
C^{(n)}_{11} & C^{(n)}_{12} \\
C^{(n)}_{21} & C^{(n)}_{22}
\end{pmatrix}
:=A_n+\sum_{j=1}^{n-1}(A_{n-j} P_j-q^j P_jB_{n-j})
\end{equation}
for simplicity.
Here $C^{(n)}_{11}$ is $s \times s$, $C^{(n)}_{12}$ is $s \times (m-s)$, $C^{(n)}_{21}$ is $(m-s) \times s$, and $C^{(n)}_{22}$ is $(m-s) \times (m-s)$.
Note that $C_n$ includes $P_1, \ldots, P_{n-1}$.
Then we have
\begin{equation}\label{eq:Pn_equation}
A_0P_n-P_n(q^nA_0)=B_n-C_n.
\end{equation}
Unlike the Fuchsian case, $\mathrm{Sp}(A_0) \cap \mathrm{Sp}(q^nA_0) \ne \emptyset$.
Thus the equation \eqref{eq:Pn_equation} with respect to $P_n$ does not necessarily have a solution.
Instead, we partition $P_n$ conformably with $C_n$
\begin{equation}
P_n=
\begin{pmatrix}
P^{(n)}_{11} & P^{(n)}_{12} \\
P^{(n)}_{21} & P^{(n)}_{22}
\end{pmatrix}
\end{equation}
and choose $P_n$ so that $P^{(n)}_{11}=O, \ P^{(n)}_{22}=O$, and
\begin{align}
B_n
=C_n+
\begin{pmatrix}
O & \Theta P^{(n)}_{12} \\
-P^{(n)}_{21}(q^n\Theta) & O
\end{pmatrix}
\end{align}
to be block-diagonal.
More specifically, we set
\begin{equation}
P^{(n)}_{12}=-\Theta^{-1}C^{(n)}_{12}, \quad
P^{(n)}_{21}=q^{-n}C^{(n)}_{21}\Theta^{-1}.
\end{equation}
Thus we have the following proposition.
\begin{prop}\label{thm:block_diagonalization}
For any system \eqref{eq:original_system_non-Fuchsian} with \eqref{eq:non-Fuchsian_A0} and \eqref{eq:non-resonant_non-Fuchsian} there exists a formal power series with matrix coefficients $P(x)=\sum_{n=0}^\infty P_nx^n$
such that the gauge transformation by $P(x)$ is block-diagonal:
\begin{align}\label{eq:block_diagonal}
Z(qx)=B(x)Z(x), \quad
B(x)=P(qx)^{-1}A(x)P(x)=
\begin{pmatrix}
B_1(x) & O \\
O & B_2(x)
\end{pmatrix}.
\end{align}
The matrix $P(x)$ can be chosen as $P^{(n)}_{11}=O$, $P^{(n)}_{22}=O$.
\end{prop}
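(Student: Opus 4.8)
The plan is to build $P(x)=\sum_{n\ge 0}P_nx^n$ and the block-diagonal form $B(x)=x^r\sum_{n\ge 0}B_nx^n$ simultaneously by induction on $n$, using the coefficient recursion \eqref{eq:Pn_equation} already derived above. The base case is forced: $P_0=I$ gives $B_0=A_0=\Theta\oplus O_{m-s}$, which is block diagonal, so the induction is seeded correctly. For the inductive step, suppose $P_1,\dots,P_{n-1}$ (with vanishing diagonal blocks) and $B_1,\dots,B_{n-1}$ (block diagonal) have already been fixed. Then $C_n$ depends only on this earlier data, so \eqref{eq:Pn_equation}, namely $A_0P_n-P_n(q^nA_0)=B_n-C_n$, becomes a single linear equation to be solved for the pair $(P_n,B_n)$ under the constraints that $P_n$ has zero diagonal blocks and $B_n$ is block diagonal.

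The hard part is conceptual rather than computational. Because $A_0=\Theta\oplus O_{m-s}$ carries the eigenvalue $0$, one has $0\in\mathrm{Sp}(A_0)\cap\mathrm{Sp}(q^nA_0)$, so the Sylvester operator $X\mapsto A_0X-X(q^nA_0)$ is \emph{not} invertible and Proposition~\ref{thm:Sylvester_eq} cannot be invoked as in the Fuchsian case; a prescribed right-hand side cannot in general be attained. The key idea is to relax the target: instead of demanding that $B_n$ equal a fixed matrix, I only require $B_n$ to be block diagonal. This supplies exactly the freedom needed to absorb the components of the equation lying in the non-invertible directions of the operator.

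Concretely, I would expand \eqref{eq:Pn_equation} in the $(s,m-s)$ block partition. With $A_0=\Theta\oplus O_{m-s}$, a direct computation shows that the left-hand side has $(1,1)$ block $\Theta P^{(n)}_{11}-P^{(n)}_{11}(q^n\Theta)$, $(1,2)$ block $\Theta P^{(n)}_{12}$, $(2,1)$ block $-P^{(n)}_{21}(q^n\Theta)$, and vanishing $(2,2)$ block. Imposing $B^{(n)}_{12}=B^{(n)}_{21}=O$ turns the $(1,2)$ and $(2,1)$ equations into $\Theta P^{(n)}_{12}=-C^{(n)}_{12}$ and $P^{(n)}_{21}(q^n\Theta)=C^{(n)}_{21}$, which are uniquely solvable since $\Theta$ is invertible by $\theta_i\ne 0$ in \eqref{eq:non-resonant_non-Fuchsian}; this yields the stated $P^{(n)}_{12}=-\Theta^{-1}C^{(n)}_{12}$ and $P^{(n)}_{21}=q^{-n}C^{(n)}_{21}\Theta^{-1}$. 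For the diagonal blocks I make the normalization $P^{(n)}_{11}=P^{(n)}_{22}=O$; the $(1,1)$ and $(2,2)$ block equations then collapse to $B^{(n)}_{11}=C^{(n)}_{11}$ and $B^{(n)}_{22}=C^{(n)}_{22}$, which merely define the diagonal blocks of $B_n$ and require no invertibility at all. Thus $P_n$ (with zero diagonal blocks) and a block-diagonal $B_n$ are produced, closing the induction.

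Finally I would note that the whole construction is purely formal: each step uses finitely many matrix products and one multiplication by $\Theta^{-1}$, so $P(x)$ and $B(x)$ are well-defined formal power series satisfying $P(qx)^{-1}A(x)P(x)=B_1(x)\oplus B_2(x)$, which is precisely \eqref{eq:block_diagonal}; no convergence claim is needed for the statement. The one point deserving care is to confirm that $C_n$ really involves only $P_1,\dots,P_{n-1}$ and $B_1,\dots,B_{n-1}$, so that the recursion is well-founded rather than circular — and this is immediate from its definition $C_n=A_n+\sum_{j=1}^{n-1}(A_{n-j}P_j-q^jP_jB_{n-j})$.
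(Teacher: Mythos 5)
Your proposal is correct and follows essentially the same route as the paper: the paper likewise derives the recursion $A_0P_n-P_n(q^nA_0)=B_n-C_n$, normalizes $P^{(n)}_{11}=P^{(n)}_{22}=O$, and solves the off-diagonal block equations via $P^{(n)}_{12}=-\Theta^{-1}C^{(n)}_{12}$ and $P^{(n)}_{21}=q^{-n}C^{(n)}_{21}\Theta^{-1}$, letting the diagonal blocks of $B_n$ be whatever $C_n$ dictates. Your explicit remarks on why Proposition~\ref{thm:Sylvester_eq} fails here and on the well-foundedness of the recursion are accurate elaborations of what the paper leaves implicit.
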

Now we apply the above construction to a polynomial coefficient system
\begin{equation}\label{eq:system_polynomial}
Y(qx)=A(x)Y(x), \quad A(x)=A_0+A_1x+\cdots+A_Nx^N.
\end{equation}
If $A_0$ is of the form \eqref{eq:non-Fuchsian_A0}, then the constant term of $B_1(x)$ of \eqref{eq:block_diagonal} is $\Theta$.
Thus the formal normal form of $B_1(x)$ at $x=0$ is $\Theta$.
On the other hand, $B_2(x)$ is of the following form:
\begin{equation}
B_2(x)=x^{r_2}B'_0+x^{r_2+1}B'_1+\cdots
\end{equation}
where $r_2$ is a positive integer.
If $B'_0$ is similar to $\Theta' \oplus O$ where $\Theta'$ is diagonal, invertible, and non-resonant (in particular we assume that $B'_0$ is diagonalizable),
then $B_2(x)$ can be block-diagonalized into the following form
\begin{equation}
\begin{pmatrix}
x^{r_2}\Theta' & O \\
O & x^{r_3}C_2(x)
\end{pmatrix}.
\end{equation}

To summarize the above,
a linear $q$-difference system \eqref{eq:system_polynomial} satisfying diagonalizability (of the first term of each direct summand) and the non-resonant condition can be transformed into the following block diagonal form:
\begin{align}\label{eq:formal_normal_form_0}
Z(qx)=B(x)Z(x), \quad
B(x)=P(qx)^{-1}A(x)P(x)=
\begin{pmatrix}
x^{r_1}B_1 & & \\
 & \ddots & \\
 & & x^{r_k}B_k
\end{pmatrix}
\end{align}
where $B_j \in \mathrm{GL}_{m_j}(\mathbb{C})$.
$r_i$'s are non-negative integers satisfying $r_1=0 < r_2 < \cdots < r_k$.
Here the numbers $r_i$'s and $m_j$'s are uniquely determined only by the original system \eqref{eq:system_polynomial}.
Moreover, if we require that any eigenvalue $\lambda$ of $B_j$ satisfies $|q| < |\lambda| \le 1$, then the conjugacy class of $B_j$ is uniquely determined   (for example see \cite{HSS}).
Then \eqref{eq:formal_normal_form_0} is the formal normal form of \eqref{eq:system_polynomial} at $x=0$.

Similarly, the formal normal form of \eqref{eq:system_polynomial} at $x=\infty$ has the following form:
\begin{equation}\label{eq:formal_normal_form_infinity}
\begin{pmatrix}
x^{N-s_1}B'_1 & & \\
 & \ddots & \\
 & & x^{N-s_\ell}B'_\ell
\end{pmatrix}
\end{equation}
where $B'_j \in \mathrm{GL}_{m'_j}(\mathbb{C})$ and $s_1=0 < s_2 < \cdots < s_\ell$.
The formal normal form at $x=\infty$ is also unique in the same sense as above.

\subsection{Spectral types of linear $q$-difference systems}\label{sec:qST}

First we recall the notion of spectral type of Fuchsian linear $q$-difference systems \cite{SY}.
Consider the following Fuchsian linear $q$-difference system of rank $m$:
\begin{equation}\label{eq:lin_q_diff}
Y(qx)=A(x)Y(x), \quad A(x)=A_0+A_1x+\cdots+A_Nx^N
\end{equation}
where $A_0$ and $A_N$ are invertible.
We assume that, for any $a \in \mathbb{C},\, A(a) \ne O$.
In addition, we assume that $A_0$ and $A_N$ are diagonalizable for simplicity.
Let the eigenvalues of $A_0$ be $\theta_j \ (j=1, \ldots, k)$, and let their multiplicities be $m_j \ (j=1, \ldots, k)$.
Also, let the eigenvalues of $A_N$ be $\kappa_j \ (j=1, \ldots, \ell)$, and let their multiplicities be $n_j \ (j=1, \ldots, \ell)$:
\begin{equation}
A_0 \sim
\theta_1 I_{m_1} \oplus \cdots \oplus \theta_k I_{m_k}, \quad
A_N \sim
\kappa_1 I_{n_1} \oplus \cdots \oplus \kappa_\ell I_{n_\ell}.
\end{equation}
Then we define partitions $S_0$ and $S_\infty$ of $m$ as
\begin{equation}
S_0=m_1, \ldots, m_k, \quad
S_\infty=n_1, \ldots, n_\ell.
\end{equation}

Let $Z_A$ be the set of the zeros of $\det A(x)$:
\begin{equation}
Z_A=\{ a \in \mathbb{C} \,|\, \det A(a)=0 \}=
\{ \alpha_1, \ldots, \alpha_p \}.
\end{equation}
We denote by $d_i \, (i=1, \ldots, m)$ the elementary divisors of $A(x)$.
Here we assume that $d_{i+1} | d_i$.
For any $\alpha_i \in Z_A$, we denote by $\tilde{n}^i_k$ the order of $\alpha_i$ in $d_k$.
Let $\{ n^i_j \}_j$ be the partition conjugate to $\{ \tilde{n}^i_k \}_k$.
Then we define $S_{\mathrm{div}}$ as
\begin{equation}
S_{\mathrm{div}}=n^1_1 \ldots n^1_{k_1}, \ldots, n^p_1 \ldots n^p_{k_p}.
\end{equation}
We call the triple $[S_0; \, S_\infty; \, S_{\mathrm{div}}]$ the \textit{spectral type} of the Fuchsian system \eqref{eq:lin_q_diff}.

Spectral types can also be defined for non-Fuchsian systems.
Taking the formal normal form \eqref{eq:formal_normal_form_0} into account,
we can define $S_0$ as $S_0=\overbrace{( \cdots (\lambda_1)\cdots)}^{\text{$r_1$-tuple}}, \ldots, \overbrace{(\cdots(\lambda_k)\cdots)}^{\text{$r_k$-tuple}}$ where $\lambda_j$ is the partition of $m_j$
determined by the multiplicities of the eigenvalues of $B_j$.
For example, if the normal form of $A(x)$ around $x=0$ is $(B_1) \oplus (x B_2) \oplus (x^3 B_3)$
and the partitions corresponding to $B_j$'s are
\begin{equation}
B_1: 3,1 \quad B_2: 2,1 \quad B_3: 2,2,2
\end{equation}
then 
$S_0=3,1, (2,1), (((2,2,2)))$.

Similarly, taking \eqref{eq:formal_normal_form_infinity} into account,
we define $S_\infty$ as $S_\infty=\overbrace{(\cdots( \lambda'_1 )\cdots)}^{\text{$s_1$-tuple}}, \ldots, \overbrace{(\cdots( \lambda'_\ell )\cdots)}^{\text{$s_\ell$-tuple}}$
where $\lambda'_j$ is the partition of $m'_j$ corresponding to $B'_j$.

$S_{\mathrm{div}}$ is the same as in the Fuchsian case.
Then the triple $[S_0; \, S_\infty; \, S_{\mathrm{div}}]$ is the spectral type.

\section{$q$-matrix $P_\mathrm{VI}$}\label{sec:q-matPVI}
In this section we review the $q$-matrix $P_{\mathrm{VI}}$ \cite{K6},
which describes a connection-preserving deformation of the Fuchsian linear $q$-difference system of spectral type $[m,m; m,m-1,1; m,m,m,m]$
(see \cite{JS, K6} for the connection-preserving deformation).

Consider a linear $q$-difference system of the following form:
\begin{align}\label{eq:linear_qmatPVI}
Y(qx)&=A(x)Y(x), \quad
A(x)=A_0+A_1x+A_2x^2, \quad
A_j \in M_{2m}(\mathbb{C}),
\end{align}
where
\begin{align}
A_2=
\begin{pmatrix}
\kappa_1 I_m & O \\
O & K
\end{pmatrix}, \quad
K=
\mathrm{diag}(\overbrace{\kappa_2, \ldots, \kappa_2}^{m-1}, \kappa_3), \quad
A_0 \sim
\begin{pmatrix}
\theta_1 t I_m & O \\
O & \theta_2 t I_m
\end{pmatrix}.
\end{align}
Since $S_{\mathrm{div}}=m,m,m,m$, 
the Smith normal form of the polynomial matrix $A(x)$ is of the following form:
\begin{equation}
\begin{pmatrix}
I_m & O \\
O & \prod_{i=1}^4 (x-\alpha_i)I_m
\end{pmatrix}.
\end{equation}
We assume that $\alpha_j$'s depend on $t$ as follows:
\begin{equation}
\alpha_j=\left\{
\begin{aligned}
a_j t \quad (j=1,2), \\
a_j \quad (j=3,4).
\end{aligned}
\right.
\end{equation}
We also assume $q\alpha_i \ne \alpha_j \ (i \ne j)$.

The linear $q$-difference systems satisfying the above conditions can be parametrized as follows:
\begin{align}
A(x)&=
\begin{pmatrix}
WK\{ \kappa_1(xI_m-F)(xI_m-\bm{\alpha})+\kappa_1G_1 \}K^{-1}W^{-1} & WK(xI_m-F) \\
\kappa_1(\bm{\gamma}x+\bm{\delta})W^{-1} & K(xI_m-\bm{\beta})(xI_m-F)+KG_2
\end{pmatrix}
\end{align}
where
\begin{align}
\bm{\alpha}&=(\kappa_1-K)^{-1}\left\{ (\theta_1+\theta_2)t F^{-1}-\kappa_1F^{-1}G_1-KG_2F^{-1}+K(F+G_1^{-1}FG_1+\beta_1) \right\}, \\
\bm{\beta}&=(\kappa_1-K)^{-1}\left\{ -(\theta_1+\theta_2)t F^{-1}+\kappa_1F^{-1}G_1+KG_2F^{-1}-\kappa_1(F+G_1^{-1}FG_1+\beta_1) \right\}, \\
\bm{\gamma}&=K\{ G_1+G_2+F\bm{\alpha}+\bm{\beta}F+\bm{\beta}\bm{\alpha}-G_1^{-1}(F^2+\beta_1F+\beta_2)G_1 \}K^{-1}, \\
\bm{\delta}&=\kappa_1^{-1}\{ t^2\theta_1\theta_2 F^{-1}-\kappa_1 K(G_2+\bm{\beta}F)F^{-1}(G_1+F\bm{\alpha}) \}K^{-1}.
\end{align}
Here the auxiliary parameters $\beta_j$'s are defined by
\begin{equation}
\sum_{j=0}^4 \beta_{4-j}z^j:=\prod_{j=1}^4(z-\alpha_j).
\end{equation}
The matrices $G_1$ and $G_2$ satisfy
\begin{align}
G_1G_2&=(F-\alpha_1I)(F-\alpha_2I)(F-\alpha_3I)(F-\alpha_4I) \label{eq:G1G2rel}.
\end{align}
The relation \eqref{eq:G1G2rel} allows us to introduce a new variable $G$ by
\begin{align}
G_1=q^{-1}\kappa_1^{-1}(F-\alpha_1)(F-\alpha_2)G^{-1}, \quad
G_2=q\kappa_1G(F-\alpha_3)(F-\alpha_4).
\end{align}
Then $F$ and $G$ satisfy the following commutation relation:
\begin{align}\label{eq:commutation_relation}
F^{-1}GFG^{-1}&=\rho K, \quad \rho=\frac{a_1a_2a_3a_4\kappa_1}{\theta_1\theta_2}. 
\end{align}

Since
\begin{equation}
\det A(x)=\kappa_1^m \kappa_2^{m-1}\kappa_3 \prod_{i=1}^4(x-\alpha_i)^m,
\end{equation}
we have
\begin{equation}\label{eq:qFuchsrel}
{\kappa_1}^m {\kappa_2}^{m-1}\kappa_3 \prod_{i=1}^4 {a_i}^m={\theta_1}^m{\theta_2}^m.
\end{equation}

Let us consider the connection-preserving deformation of the system \eqref{eq:linear_qmatPVI}.
We choose $t$ as a deformation parameter.
The parameters $\theta_j$, $\kappa_j$, and $a_j$'s are independent of $t$.
In the following we write $A(x,t)$ instead of $A(x)$ when it is necessary to emphasize that $A(x)$ depends on $t$.

The connection-preserving deformation of \eqref{eq:linear_qmatPVI} is given by
\begin{align}\label{eq:B_qmatPVI}
Y(x, qt)=B(x, t)Y(x, t)
\end{align}
where
\begin{align}
B(x,t)=
\frac{x(xI+B_0)}{(x-qa_1t)(x-qa_2t)}, \quad
B_0=
\begin{pmatrix}
B_{11} & B_{12} \\
B_{21} & B_{22}
\end{pmatrix}.
\end{align}
Here $B_{ij}$'s are $m \times m$ matrices and given as follows:
\begin{align}
B_{11}&=
qWK(I-\overline{G}K)^{-1}\overline{G}K \left[ K^{-1}\overline{G}^{-1}\{ F-(a_1+a_2)t \}+\bm{\beta} \right]K^{-1}W^{-1}, \\
B_{12}&=
qWK(I-\overline{G}K)^{-1}\overline{G}, \\
B_{21}&=q\kappa_1\left\{ q^{-1}\kappa_1^{-1}(\overline{F}-qa_2t)\overline{G}^{-1}-qa_1t+\overline{\bm{\alpha}} \right\}
(I-q\kappa_1\overline{G})^{-1} \nonumber \\
&\quad \times \overline{G}K
\left\{ K^{-1}\overline{G}^{-1}( F-a_2t )-a_1t+\bm{\beta} \right\}K^{-1}W^{-1} \\
&=q\kappa_1\left\{ q^{-1}\kappa_1^{-1}(\overline{F}-qa_1t)\overline{G}^{-1}-qa_2t+\overline{\bm{\alpha}} \right\}
(I-q\kappa_1\overline{G})^{-1} \nonumber \\
&\quad \times \overline{G}K
\left\{ K^{-1}\overline{G}^{-1}( F-a_1t )-a_2t+\bm{\beta} \right\}K^{-1}W^{-1}, \\
B_{22}&=\left[ q^{-1}\kappa_1^{-1} \{ \overline{F}-q(a_1+a_2)t \}\overline{G}^{-1}+\overline{\bm{\alpha}} \right]
q\kappa_1\overline{G}(I-q\kappa_1\overline{G})^{-1}.
\end{align}
Here the overline denotes the $q$-shift with respect to $t$: $\overline{f}=f(qt)$ for $f=f(t)$.

Now we have the pair of linear $q$-difference systems:
\begin{equation}\label{eq:qLax}
\left\{
\begin{aligned}
Y(qx,t)=A(x,t)Y(x,t), \\
Y(x,qt)=B(x,t)Y(x,t).
\end{aligned}
\right.
\end{equation}
Then the compatibility condition of \eqref{eq:qLax}
\begin{equation}
A(x, qt)B(x, t)=B(qx, t)A(x, t)
\end{equation}
reduces to a system of non-linear $q$-difference equations satisfied by $F$, $G$, and $W$.

\begin{thm}{\cite{K6}}
The compatibility condition $A(x,qt)B(x,t)=B(qx,t)A(x,t)$ is equivalent to
\begin{align}
\overline{G}KG&=\frac{1}{q\kappa_1}(F-a_1t)(F-a_2t)(F-a_3)^{-1}(F-a_4)^{-1}, \label{eq:Geq_thm}\\
\overline{F}KF&=
\frac{\theta_1\theta_2}{\kappa_1a_1a_2}\left( \overline{G}-t\frac{a_1a_2}{\theta_1} \right)\left( \overline{G}-t\frac{a_1a_2}{\theta_2} \right)
\left( \overline{G}-\frac{1}{q\kappa_1} \right)^{-1}\left( \overline{G}-\rho \right)^{-1}, \label{eq:Feq_thm}\\
W^{-1}\overline{W}&=q\kappa_1(\overline{G}-K^{-1})^{-1}\left( \overline{G}-\frac{1}{q\kappa_1} \right)K^{-1}. \label{eq:Weq_thm}
\end{align}
\end{thm}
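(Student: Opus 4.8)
The plan is to read the compatibility condition $A(x,qt)B(x,t)=B(qx,t)A(x,t)$ as an identity of matrix-valued rational functions of $x$ and to pin it down from its poles and its behaviour at $x=0,\infty$. First I would note that $B(x,t)$ contributes simple poles at $x=qa_1t,qa_2t$, that $B(qx,t)=\dfrac{x(qxI+B_0)}{q(x-a_1t)(x-a_2t)}$ contributes simple poles at $x=a_1t,a_2t$, and that $A(x,qt),A(x,t)$ are polynomials; since $q\alpha_i\neq\alpha_j$ these four points are distinct. Equating the two sides therefore forces the residue of the left-hand side at $x=qa_jt$ and of the right-hand side at $x=a_jt$ to vanish for $j=1,2$. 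Once these residues cancel both sides are polynomials of degree two in $x$; they share the leading coefficient $A_2=\kappa_1 I_m\oplus K$ (because $B(x,t),B(qx,t)\to I$ as $x\to\infty$ and $A_2$ is independent of $t$) and they both vanish at $x=0$ (because of the factor $x$ in the numerator of $B$), so only the residue conditions and the matching of the degree-one coefficient remain to be analysed.

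The central step is to make the four residue conditions explicit. Here I would use the spectral type $S_{\mathrm{div}}=m,m,m,m$: since $\det A(x,t)\propto\prod_{i=1}^{4}(x-\alpha_i)^m$ with Smith normal form $I_m\oplus\prod_i(x-\alpha_i)I_m$, each matrix $A(\alpha_i,t)$ drops rank by exactly $m$, so its $m$-dimensional kernel and image can be read off from the block parametrisation of $A(x)$. The residue conditions then become the purely algebraic statements $(qa_jt\,I+B_0)A(a_jt,t)=O$ and $A(qa_jt,qt)(qa_jt\,I+B_0)=O$, i.e. $\mathrm{Image}\,A(a_jt,t)\subseteq\mathrm{Ker}(qa_jt\,I+B_0)$ and $\mathrm{Image}(qa_jt\,I+B_0)\subseteq\mathrm{Ker}\,A(qa_jt,qt)$ for $j=1,2$. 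The first pair involves $F,G,W$ at time $t$, the second pair involves $\overline F,\overline G,\overline W$ at time $qt$, and $B_0$ appears in both; substituting the explicit blocks $B_{11},\dots,B_{22}$ and eliminating $B_0$ links the time-$t$ data to the time-$qt$ data.

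Carrying out this elimination is what I expect to yield the three equations: combining the image condition at time $t$ with the kernel condition at time $qt$ should produce the recursion \eqref{eq:Geq_thm} for $\overline G$ in terms of $F,G$ and then \eqref{eq:Feq_thm} for $\overline F$ in terms of $\overline G$, while keeping track of the conjugating factor $W$ that sits in the $(1,1)$, $(1,2)$ and $(2,1)$ blocks of $A(x)$ and $B_0$ should isolate the gauge relation \eqref{eq:Weq_thm}; the degree-one coefficient matching then holds as a consistency check. For the converse one substitutes \eqref{eq:Geq_thm}--\eqref{eq:Weq_thm} back into $B_0$ and verifies that all four residues vanish and that the two degree-two polynomials coincide, recovering the full identity.

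The main obstacle will be the non-abelian bookkeeping. Because $F,G,K,W$ are $m\times m$ matrices that do not commute, every kernel and image computation and every cancellation must respect the order of the factors, and the reduction is feasible only through systematic use of the commutation relation \eqref{eq:commutation_relation}, $F^{-1}GFG^{-1}=\rho K$, together with the factorisations $G_1=q^{-1}\kappa_1^{-1}(F-\alpha_1)(F-\alpha_2)G^{-1}$ and $G_2=q\kappa_1 G(F-\alpha_3)(F-\alpha_4)$. Managing this ordering while eliminating the off-diagonal blocks $B_{12},B_{21}$ to separate the $F$-, $G$-, and $W$-recursions is where the genuine computation lies; the residue framework only guarantees in advance that the compatibility must be expressible through exactly these three equations.
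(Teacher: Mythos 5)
Your framework is the right one, and it is exactly how such statements are organized in the connection-preserving-deformation literature this paper builds on (Jimbo--Sakai \cite{JS}): the identity $A(x,qt)B(x,t)=B(qx,t)A(x,t)$ is an identity of rational matrix functions whose two sides have disjoint pole sets, so all four residues must vanish, after which only the $x^1$ coefficient of two quadratic matrix polynomials remains to be compared (the $x^0$ and $x^2$ coefficients matching for the reasons you give). Note that the paper itself offers no proof of this theorem --- it is imported from \cite{K6}, and the analogous theorem in Section~4 is proved by ``direct calculation'' --- so the only thing to compare against is that computation, and your residue scheme is the standard way to structure it.

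The gap is that the computation, which is the entire content of the proof, is nowhere performed. The theorem asserts an \emph{equivalence} between the compatibility condition and \eqref{eq:Geq_thm}--\eqref{eq:Weq_thm} for a $B_0$ whose blocks $B_{11},\ldots,B_{22}$ are already given explicitly in terms of $F,G,W,\overline{F},\overline{G},\overline{W}$; your plan instead treats $B_0$ as an unknown to be pinned down by kernel/image conditions, which is how one \emph{derives} the deformation matrix, not how one proves the stated equivalence. To close the argument one must substitute the given blocks into $(qa_jt\,I+B_0)A(a_jt,t)=O$, $A(qa_jt,qt)(qa_jt\,I+B_0)=O$ and into the $x^1$-coefficient identity, and show --- using the commutation relation \eqref{eq:commutation_relation} and the factorizations of $G_1,G_2$ at every step, since none of the factors commute --- that these conditions hold if and only if the three displayed equations hold, with no residual constraints. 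You correctly locate this as ``where the genuine computation lies,'' but locating it is not doing it; as written the proposal is a correct strategy rather than a proof. (A minor further point: distinctness of the four poles requires $a_1\neq a_2$ in addition to the stated hypothesis $q\alpha_i\neq\alpha_j$ for $i\neq j$.)
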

We call the system \eqref{eq:Geq_thm} and \eqref{eq:Feq_thm} (with \eqref{eq:commutation_relation}) the $q$-matrix $P_{\mathrm{VI}}$.
Although this system appears to have eight parameters ($\theta_i$'s, $\kappa_i$'s, and $a_i$'s with a single relation),
the number of parameters can be reduced to five by rescaling $F$, $G$, and $t$.

\section{Degeneration of $q$-matrix $P_\mathrm{VI}$}\label{sec:degeneration}
Now we consider a degeneration of the $q$-matrix $P_{\mathrm{VI}}$ which corresponds to the limit $\kappa_1$ to 0.

\subsection{From $q$-matrix $P_\mathrm{VI}$ to $q$-matrix $P_\mathrm{V}$}
Consider the following transformation:
\begin{equation}\label{eq:transf_degeneration}
\begin{aligned}
&t=\varepsilon \tilde{t}, \quad F=\varepsilon \tilde{F}, \quad G=\varepsilon \tilde{G}, \quad W=\varepsilon \tilde{W}, \\
&a_3=-\varepsilon \tilde{a}_3, \quad a_4=-\varepsilon^{-1} \tilde{\kappa}_1, \quad \kappa_1=\varepsilon, \quad \kappa_2=\varepsilon^{-1}\tilde{\kappa}_2, \quad
\kappa_3=\varepsilon^{-1}\tilde{\kappa}_3.
\end{aligned}
\end{equation}
We set $\tilde{K}=\mathrm{diag}(\overbrace{\tilde{\kappa}_2, \ldots, \tilde{\kappa}_2}^{m-1}, \tilde{\kappa}_3)$
so that we have $K=\varepsilon^{-1}\tilde{K}$.
The other parameters $a_1, a_2$ and $\theta_1, \theta_2$ are not changed.
This transformation is compatible with the commutation relation~\eqref{eq:commutation_relation},
that is, $\tilde{F}^{-1}\tilde{G}\tilde{F}\tilde{G}^{-1}=\tilde{\rho}\tilde{K}$ holds
where $\tilde{\rho}=\frac{a_1 a_2 \tilde{a}_3 \tilde{\kappa}_1}{\theta_1\theta_2}$.

Substituting \eqref{eq:transf_degeneration} into \eqref{eq:Geq_thm}, we have
\begin{equation}
\varepsilon \overline{\tilde{G}}\tilde{K}\tilde{G}=
\frac{\varepsilon}{q}(\tilde{F}-a_1\tilde{t})(\tilde{F}-a_2\tilde{t})(\tilde{F}+\tilde{a}_3)^{-1}(\varepsilon^2\tilde{F}+\tilde{\kappa}_1)^{-1}.
\end{equation}
Letting $\varepsilon \to 0$, we obtain
\begin{equation}
\overline{\tilde{G}}\tilde{K}\tilde{G}=
\frac{1}{q\tilde{\kappa}_1}(\tilde{F}-a_1\tilde{t})(\tilde{F}-a_2\tilde{t})(\tilde{F}+\tilde{a}_3)^{-1}.
\end{equation}
Similarly, from the equation \eqref{eq:Feq_thm} we have
\begin{equation}
\varepsilon\overline{\tilde{F}}\tilde{K}\tilde{F}=
\varepsilon\frac{\theta_1\theta_2}{a_1a_2}
\left( \overline{\tilde{G}}-\tilde{t}\frac{a_1a_2}{\theta_1} \right)\left( \overline{\tilde{G}}-\tilde{t}\frac{a_1a_2}{\theta_2} \right)
\left( \varepsilon^2\overline{\tilde{G}}-\frac{1}{q} \right)^{-1}\left( \overline{\tilde{G}}-\tilde{\rho} \right)^{-1}.
\end{equation}
Letting $\varepsilon \to 0$, we obtain
\begin{equation}
\overline{\tilde{F}}\tilde{K}\tilde{F}=
-q\frac{\theta_1\theta_2}{a_1a_2}
\left( \overline{\tilde{G}}-\tilde{t}\frac{a_1a_2}{\theta_1} \right)\left( \overline{\tilde{G}}-\tilde{t}\frac{a_1a_2}{\theta_2} \right)
\left( \overline{\tilde{G}}-\tilde{\rho} \right)^{-1}.
\end{equation}
From the equation \eqref{eq:Weq_thm} we have
\begin{equation}
{\tilde{W}}^{-1}\overline{\tilde{W}}=q(\overline{\tilde{G}}-\tilde{K}^{-1})^{-1}\left( \varepsilon^2\overline{\tilde{G}}-\frac{1}{q} \right)\tilde{K}^{-1}.
\end{equation}
Letting $\varepsilon \to 0$, we obtain
\begin{equation}
{\tilde{W}}^{-1}\overline{\tilde{W}}=-(\tilde{K}\overline{\tilde{G}}-I)^{-1}.
\end{equation}

Omitting the tilde, we obtain the following system of non-linear $q$-difference equations
\begin{align}
\overline{G}KG&=\frac{1}{q \kappa_1}(F-a_1t)(F-a_2t)(F+a_3)^{-1}, \\
\overline{F}KF&=
-q\frac{\theta_1\theta_2}{a_1a_2}
\left( \overline{G}-t\frac{a_1a_2}{\theta_1} \right)\left( \overline{G}-t\frac{a_1a_2}{\theta_2} \right)
\left( \overline{G}-\rho \right)^{-1}, \\
W^{-1}\overline{W}&=(I-K\overline{G})^{-1}.
\end{align}

The associated linear system \eqref{eq:linear_qmatPVI} can also be degenerated in the same manner as above.
Set $x=\varepsilon \tilde{x}$.
Notice that
\begin{equation}
\bm{\alpha}=-\tilde{\kappa}_1 \varepsilon^{-1}+O(1), \quad
\bm{\beta}=O(\varepsilon), \quad
\bm{\gamma}=O(1), \quad
\bm{\delta}=O(\varepsilon), \quad
G_1=O(1), \quad
G_2=O(\varepsilon^2).
\end{equation}
We set
\begin{equation}
\tilde{\bm{\beta}}:=\lim_{\varepsilon \to 0} \frac{\bm{\beta}}{\varepsilon}, \quad
\tilde{\bm{\gamma}}:=\lim_{\varepsilon \to 0} \bm{\gamma}, \quad
\tilde{\bm{\delta}}:=\lim_{\varepsilon \to 0} \frac{\bm{\delta}}{\varepsilon}, \quad
\tilde{G}_1:=\lim_{\varepsilon \to 0} G_1, \quad
\tilde{G}_2:=\lim_{\varepsilon \to 0} \frac{G_2}{\varepsilon^2}.
\end{equation}
Then it is easy to see that
\begin{align}
\tilde{A}(\tilde{x}):=
\lim_{\varepsilon \to 0}\varepsilon^{-1}A(x)=
\begin{pmatrix}
\tilde{W}\tilde{K}\{ \tilde{\kappa}_1(\tilde{x}I_m-\tilde{F})+\tilde{G}_1 \}\tilde{K}^{-1}\tilde{W}^{-1} & \tilde{W}\tilde{K}(\tilde{x}I_m-\tilde{F}) \\
(\tilde{\bm{\gamma}}\tilde{x}+\tilde{\bm{\delta}})\tilde{W}^{-1} & \tilde{K}(\tilde{x}I_m-\tilde{\bm{\beta}})(\tilde{x}I_m-\tilde{F})+\tilde{K}\tilde{G}_2
\end{pmatrix}.
\end{align}

\begin{rem}
The multiplication of $A(x)$ by $\varepsilon^{-1}$ can be realized by a simple gauge transformation of the linear system.
For example, consider the transformation $Y=x^{\log\varepsilon / \log q} \tilde{Y}$
or use the ratio of theta functions \eqref{eq:theta} instead of $x^{\log\varepsilon / \log q}$.
Then we have $\tilde{Y}(qx)=\varepsilon^{-1}A(x)\tilde{Y}(x)$.
\end{rem}
Thus we obtain (by omitting the tilde)
\begin{align}\label{eq:parametrization_of_LS}
A(x)&=
\begin{pmatrix}
WK\{ \kappa_1(xI_m-F)+G_1 \}K^{-1}W^{-1} & WK(xI_m-F) \\
(\bm{\gamma}x+\bm{\delta})W^{-1} & K(xI_m-\bm{\beta})(xI_m-F)+K G_2
\end{pmatrix} \nonumber \\
&=:A_0+A_1x+A_2x^2,
\end{align}
where
\begin{align}
\bm{\beta}&=K^{-1}\{ (\theta_1+\theta_2)t F^{-1}-F^{-1}G_1-KG_2F^{-1}+\kappa_1 \}, \\
\bm{\gamma}&=K\{ G_1-\kappa_1(F+\bm{\beta}+GFG^{-1}-(a_1+a_2)t+a_3) \}K^{-1}, \\
\bm{\delta}&=F^{-1}(G_1-\kappa_1 F-\theta_1 t)(G_1-\kappa_1 F-\theta_2 t)K^{-1}.
\end{align}
From the determinant of \eqref{eq:parametrization_of_LS}
we have
\begin{equation}
{\kappa_1}^m {\kappa_2}^{m-1}\kappa_3 \prod_{i=1}^3 {a_i}^m={\theta_1}^m{\theta_2}^m.
\end{equation}
The matrices $G_1$ and $G_2$ are given by
\begin{equation}
G_1=q^{-1}(F-a_1 t)(F-a_2 t)G^{-1}, \quad
G_2=q\kappa_1 G(F+a_3)
\end{equation}
and satisfy
\begin{equation}
G_1G_2=\kappa_1(F-a_1t)(F-a_2t)(F+a_3).
\end{equation}
The matrices $F$ and $G$ satisfy the following commutation relation:
\begin{equation}\label{eq:commutation_relation_qmatPV}
F^{-1}GFG^{-1}=\rho K, \quad \rho=\frac{a_1a_2a_3\kappa_1}{\theta_1\theta_2}. 
\end{equation}

The system in $t$-direction \eqref{eq:B_qmatPVI} can also be degenerated in the same manner.
As a result, we have (by omitting the tilde)
\begin{align}\label{eq:B_qmatPV}
B(x,t)=
\frac{x(xI+B_0)}{(x-qa_1t)(x-qa_2t)}, \quad
B_0=
\begin{pmatrix}
B_{11} & B_{12} \\
B_{21} & B_{22}
\end{pmatrix}
\end{align}
where $B_{ij}$'s are $m \times m$ matrices given by
\begin{align}
B_{11}&=
qWK(I-\overline{G}K)^{-1} \left\{ F-(a_1+a_2)t+\overline{G}K\bm{\beta} \right\}K^{-1}W^{-1}, \\
B_{12}&=
qWK(I-\overline{G}K)^{-1}\overline{G}, \\
B_{21}&=\left\{ (\overline{F}-qa_2t)\overline{G}^{-1}-q\kappa_1 \right\}
\left( F-a_2t-a_1t\overline{G}K+\overline{G}K\bm{\beta} \right)K^{-1}W^{-1} \\
&=\left\{ (\overline{F}-qa_1t)\overline{G}^{-1}-q\kappa_1 \right\}
\left( F-a_1t-a_2t\overline{G}K+\overline{G}K\bm{\beta} \right)K^{-1}W^{-1}, \\
B_{22}&=\overline{F}-q(a_1+a_2)t-q\kappa_1\overline{G}.
\end{align}
We have the following theorem.
\begin{thm}
The compatibility condition $A(x,qt)B(x,t)=B(qx,t)A(x,t)$ with \eqref{eq:parametrization_of_LS} and \eqref{eq:B_qmatPV} is equivalent to
\begin{align}
\overline{G}KG&=
\frac{1}{q \kappa_1}(F-a_1 t)(F-a_2 t)(F+a_3)^{-1}, \label{eq:q-matPVG} \\
\overline{F}KF&=
-q\frac{\theta_1\theta_2}{a_1a_2}
\left( \overline{G}-t\frac{a_1a_2}{\theta_1} \right)\left( \overline{G}-t\frac{a_1a_2}{\theta_2} \right)
\left( \overline{G}-\rho \right)^{-1}, \label{eq:q-matPVF} \\
W^{-1}\overline{W}&=-(K\overline{G}-I)^{-1}. \label{eq:q-matPVW}
\end{align}
\end{thm}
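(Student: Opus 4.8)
The plan is to prove the equivalence by the same route used for the $q$-matrix $P_{\mathrm{VI}}$ in Section~\ref{sec:q-matPVI}: turn the compatibility condition into a single polynomial identity in $x$ and compare coefficients. First I would clear the denominators. Since $B(qx,t)=\dfrac{x(qxI+B_0)}{q(x-a_1t)(x-a_2t)}$, the condition $A(x,qt)B(x,t)=B(qx,t)A(x,t)$, after cancelling the common factor $x$, becomes
\begin{equation}
q(x-a_1t)(x-a_2t)\,A(x,qt)\,(xI+B_0)=(x-qa_1t)(x-qa_2t)\,(qxI+B_0)\,A(x,t).
\end{equation}
Both sides are $M_{2m}(\mathbb{C})$-valued polynomials of degree $5$ in $x$, so the identity is equivalent to six matrix equations obtained by equating the coefficients of $x^0,\dots,x^5$. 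The $x^5$ coefficient is $qA_2=qA_2$ and holds identically (recall $A_2=\mathrm{diag}(O,K)$ is $t$-independent), while the $x^0$ coefficient reduces to $\overline{A_0}B_0=qB_0A_0$; the four intermediate coefficients carry the remaining content.

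Next I would substitute the explicit block parametrizations \eqref{eq:parametrization_of_LS} of $A_0,A_1,A_2$ and \eqref{eq:B_qmatPV} of $B_0$ and process each coefficient equation block by block, as $2\times2$ arrays of $m\times m$ matrices. I expect the off-diagonal $(1,2)$-block equations, which are linear in the clean data $WK(xI-F)$ and involve $B_{12}=qWK(I-\overline{G}K)^{-1}\overline{G}$, to yield \eqref{eq:q-matPVG} after inserting $G_1=q^{-1}(F-a_1t)(F-a_2t)G^{-1}$. The $(1,1)$- and $(2,2)$-block equations, combined so as to eliminate $W$, should produce \eqref{eq:q-matPVW} from the diagonal structure of $K$, and then, after substituting the definition of $\bm{\beta}$, collapse to the $\overline{F}KF$ relation \eqref{eq:q-matPVF}. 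Throughout, the commutation relation \eqref{eq:commutation_relation_qmatPV}, $F^{-1}GFG^{-1}=\rho K$, is the essential tool for reordering $F$ and $G$ and for producing the rational factor involving $\rho$ on the right-hand side of \eqref{eq:q-matPVF}.

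The main obstacle will be the noncommutative bookkeeping: the parametrization mixes $F$, $G$, $W$, $K$ together with rational expressions such as $(I-\overline{G}K)^{-1}$ and $(F+a_3)^{-1}$, and one must show that the coefficient equations are not merely necessary but jointly \emph{equivalent} to the three stated evolution equations---that is, that no extra constraint is hidden and none is redundant. Establishing the forward implication (the three equations $\Rightarrow$ compatibility) is a direct substitution check; the reverse implication requires solving the coefficient system and verifying that its unique solution is exactly \eqref{eq:q-matPVG}--\eqref{eq:q-matPVW}.

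As a consistency check I would exploit the degeneration itself. The Lax pair \eqref{eq:parametrization_of_LS}--\eqref{eq:B_qmatPV} and the equations \eqref{eq:q-matPVG}--\eqref{eq:q-matPVW} are the $\varepsilon\to0$ limits of their $q$-matrix $P_{\mathrm{VI}}$ counterparts under \eqref{eq:transf_degeneration}, and the relation $A(x,qt)B(x,t)=B(qx,t)A(x,t)$ is preserved entrywise under this scaling. Since the theorem of Section~\ref{sec:q-matPVI} already gives the equivalence before the limit, passing to the limit makes the forward implication transparent, provided the limits of all matrix entries exist; this last point follows from the order estimates $\bm{\alpha}=O(\varepsilon^{-1})$, $\bm{\beta},\bm{\delta}=O(\varepsilon)$, $\bm{\gamma}=O(1)$, $G_1=O(1)$, $G_2=O(\varepsilon^2)$ recorded above. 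This limiting argument does not by itself yield the reverse implication, so the rigorous proof still rests on the coefficient comparison, but it provides a strong sanity check on the resulting equations.
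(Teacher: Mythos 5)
Your proposal follows essentially the same route as the paper, whose proof is simply the direct calculation you outline: clearing denominators to get the degree-five polynomial identity $q(x-a_1t)(x-a_2t)A(x,qt)(xI+B_0)=(x-qa_1t)(x-qa_2t)(qxI+B_0)A(x,t)$, comparing coefficients blockwise, and using the commutation relation \eqref{eq:commutation_relation_qmatPV} to reorder $F$ and $G$; your checks of the $x^5$ and $x^0$ coefficients and the observation that $A_2=O_m\oplus K$ is $t$-independent are all correct. The degeneration-limit argument you add is a reasonable sanity check consistent with how the Lax pair itself was derived in Section~\ref{sec:degeneration}, but as you note it does not replace the coefficient comparison for the equivalence.
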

\begin{proof}
Direct calculation.
\end{proof}
We call the system \eqref{eq:q-matPVG} and \eqref{eq:q-matPVF} (with \eqref{eq:commutation_relation_qmatPV}) the $q$-matrix fifth Painlev\'e system
($q$-matrix $P_{\mathrm{V}}$).
Although this system appears to have seven parameters ($\theta_i$'s, $\kappa_i$'s, and $a_i$'s with a single relation),
the number of parameters can be reduced to four by rescaling $F$, $G$, and $t$.

\subsection{Characterization of the linear system}
The matrix \eqref{eq:parametrization_of_LS} satisfies

\noindent
(C1): $A_0$ is similar to $\theta_1 t I_m \oplus \theta_2 t I_m$.

\noindent
(C2): The formal normal form of $A(x)$ at $x=\infty$ is
\begin{equation}
\begin{pmatrix}
x^2 K & O \\
O & x (\kappa_1 I_m)
\end{pmatrix}.
\end{equation}

\noindent
(C3): The Smith normal form of $A(x)$ is
\begin{equation}
\begin{pmatrix}
I_m & O \\
O & \prod_{j=1}^3(x-\alpha_j)I_m
\end{pmatrix}.
\end{equation}
Conversely, it can be shown that a polynomial matrix $A(x)$ satisfying the above three conditions can be written (generically) in the form \eqref{eq:parametrization_of_LS}.
Thus the linear system associated with the $q$-matrix $P_{\mathrm{V}}$
is characterized by the conditions (C1), (C2), and (C3).
From the definition given in Section~\ref{sec:qST},
the spectral type of the system is written as $[m,m;\, m-1, 1,(m);\, m,m,m]$.

\section{Continuous limit of $q$-matrix $P_{\mathrm{V}}$}\label{sec:CL}

The system \eqref{eq:q-matPVG} and \eqref{eq:q-matPVF} can be viewed as a $q$-analogue of the matrix $P_{\mathrm{V}}$ \eqref{eq:non-abelian_mPV}.
That is, taking the limit $q \to 1$, one can obtain \eqref{eq:non-abelian_mPV} from \eqref{eq:q-matPVG} and \eqref{eq:q-matPVF}. 
In fact, let us define the parameter $\varepsilon$ by $q=1+\varepsilon$.
We set
\begin{equation}
\begin{aligned}
&\theta_i=1+\sigma_i \varepsilon \ (i=1,2), \quad 
\kappa_1=-1+\mu_1\varepsilon, \quad
\kappa_i=-\varepsilon(1-\mu_i \varepsilon) \ (i=2,3), \\
&a_i=1-\zeta_i \varepsilon \ (i=1, 2), \quad a_3=-\varepsilon^{-1},
\end{aligned}
\end{equation}
and $M=\mathrm{diag}(\overbrace{\mu_2, \ldots, \mu_2}^{m-1}, \mu_3)$.
Moreover, we introduce new dependent variables $Q$ and $P$
which are related to $F$ and $G$ by
\begin{align}
&F=-(\tilde{P}-\varepsilon^{-1}t)(\phi_1-\phi_2\tilde{Q})^{-1}\tilde{Q}, \quad
G=(\tilde{P}-\varepsilon^{-1}t)(\phi_1-\phi_2\tilde{Q})^{-1}, \\
&\phi_1=-\varepsilon^{-1}-1-\zeta_1-\zeta_2-\frac{\sigma_1+\sigma_2}{2}, \quad
\phi_2=-\varepsilon^{-1}-\frac{\zeta_1+\zeta_2}{2}, \\
&\tilde{Q}=I-\hat{Q}^{-1}, \quad
\tilde{P}=t\left\{ (\hat{Q}-I)\hat{P}\hat{Q}+\frac{\zeta_2-\zeta_1+\sigma_1-\sigma_2}{2}\hat{Q}+\frac{\zeta_1-\zeta_2}{2} \right\}, \\
&\hat{Q}=g^{-1}Qg, \quad \hat{P}=g^{-1}Pg.
\end{align}
Here $g=t^M$,
which is a solution to $\frac{dg}{dt}g^{-1}=\frac{1}{t}M$.

Then, taking the limit $\varepsilon \to 0$, we find that $Q$ and $P$ satisfy the following equations:
\begin{align}
&t\frac{dQ}{dt}=Q(Q-1)(P+t)+P(Q-1)Q-(\zeta_1-\zeta_2)(Q-1)+(\sigma_1-\sigma_2)Q, \\
&t\frac{dP}{dt}=-(Q-1)P(P+t)-(P+t)PQ-(\zeta_2-\zeta_1+\sigma_1-\sigma_2)P-(\zeta_2+\zeta_4+\sigma_1)t.
\end{align}
These equations coincide with \eqref{eq:non-abelian_mPV} by the following correspondence of the parameters:
\begin{align}
&\sigma_1-\sigma_2 = \theta^0, \quad \zeta_1-\zeta_2 = \theta^1, \quad 
\mu_i+\zeta_2+\sigma_2 = \theta^\infty_i \ (i=1,2,3).
\end{align}

Expanding \eqref{eq:commutation_relation_qmatPV} with respect to the small parameter $\varepsilon$ and
taking the coefficient of $\varepsilon^1$, we have the commutation relation between $P$ and $Q$:
\begin{equation}
PQ-QP=(\mu_1+\zeta_1+\zeta_2+\sigma_1+\sigma_2)I_m+M.
\end{equation}

The linear system \eqref{eq:parametrization_of_LS} also admits the continuous limit in a similar way.
To see this, we first change the dependent variable $Y$ to $Z$:
$Y(x)=f(x)Z(x)$, where $f(x)$ is a solution of the following $q$-difference equation
\begin{equation}
f(qx)=-(x-t)f(x).
\end{equation}
For example, we can take
\begin{equation}
f(x)=\frac{\vartheta_q(x/t)}{(x/t; q)_\infty\vartheta_q(x)},
\end{equation}
where
\begin{equation}\label{eq:theta}
(a; q)_\infty =\prod_{n=0}^{\infty}(1-aq^n), \quad
\vartheta_q(x)=\prod_{n=0}^\infty (1-q^{n+1})(1+xq^n)(1+x^{-1}q^{n+1}).
\end{equation}
Then we have
\begin{align}
\frac{Z(x)-Z(qx)}{(1-q)x}=\frac{1}{-\varepsilon x}\left\{ I_{2m}-\frac{1}{-(x-t)}A(x) \right\}Z(x).
\end{align}
Set $W=t U^{-1}$.
Define matrices $\mathsf{A}_0$, $\mathsf{A}_1$, and $\mathsf{A}_\infty$ by
\begin{equation}
\lim_{\varepsilon \to 0}\frac{1}{-\varepsilon x}\left\{ I_{2m}-\frac{1}{-(x-t)}A(x) \right\}=
\frac{\mathsf{A}_0}{x}+\frac{\mathsf{A}_1}{x-t}+\mathsf{A}_\infty.
\end{equation}
It can be shown that  the matrices $\mathsf{A}_0$, $\mathsf{A}_1$, and $\mathsf{A}_\infty$ (almost) coincide with \eqref{eq:matrices_matrixPV}.
More precisely, performing suitable scalar gauge transformations
(in other words, adding suitable scalar matrices to $\mathsf{A}_0$, $\mathsf{A}_1$, and $\mathsf{A}_\infty$),
performing the gauge transformation by
\begin{equation}
h=
\begin{pmatrix}
O & I_m \\
I_m & O
\end{pmatrix},
\end{equation}
and setting $x=t \tilde{x}$, we have
\begin{align}
h^{-1}(\mathsf{A}_0-\sigma_2 I_{2m})h=\mathcal{A}_0, \quad
h^{-1}(\mathsf{A}_1-\zeta_2 I_{2m})h=\mathcal{A}_1, \quad
h^{-1}(t\mathsf{A}_\infty-t I_{2m})h=\mathcal{A}_\infty.
\end{align}
Thus the resulting system of linear differential equations
\begin{align}
\frac{d\tilde{Z}}{d\tilde{x}}=\left( \frac{\mathcal{A}_0}{\tilde{x}}+\frac{\mathcal{A}_1}{\tilde{x}-1}+\mathcal{A}_\infty \right)\tilde{Z}
\end{align}
coincides with the $x$-direction of \eqref{eq:Lax_matPV}.

\appendix
\section{The matrix fifth Painlev\'e system}\label{sec:mpVI}

In this appendix, we review the matrix fifth Painlev\'e system (matrix $P_{\mathrm{V}}$)~\cite{K2, K5}.
The matrix $P_{\mathrm{V}}$ is derived from the isomonodromic deformation of a certain linear differential system.
There are several Lax pairs for the matrix $P_\mathrm{V}$, one of them is the following:
\begin{equation}\label{eq:Lax_matPV}
\left\{
\begin{aligned}
\frac{\partial Y}{\partial x}&=
\left(
\frac{\mathcal{A}_0}{x}+\frac{\mathcal{A}_1}{x-1}+\mathcal{A}_\infty
\right)Y, \\
\frac{\partial Y}{\partial t}&=(-E_2\otimes I_m x+\mathcal{B}_1)Y ,
\end{aligned}
\right.
\end{equation}
where
\begin{equation}\label{eq:matrices_matrixPV}
\begin{aligned}
\mathcal{A}_{\xi}&=
(I_m \oplus U)^{-1}\hat{\mathcal{A}}_{\xi}(I_m \oplus U) \quad (\xi=0, 1, \infty),\\
\mathcal{A}_\infty&=
\begin{pmatrix}
O_m & O_m \\
O_m & -t I_m
\end{pmatrix},\quad
\hat{\mathcal{A}}_0=
\begin{pmatrix}
QP+\theta^0+\theta^\infty_1 \\
tI_m
\end{pmatrix}
\left(I_m-Q,\ \frac{1}{t}\{(Q-I_m)QP+(\theta^0+\theta^\infty_1)Q-\theta^\infty_1\}\right),\\
\hat{\mathcal{A}}_1&=
\begin{pmatrix}
(Q-I_m)PQ+(\theta^0+\theta^\infty_1)Q+\theta^1 \\
tQ
\end{pmatrix}
\left(I_m,\ \frac{1}{t}
\{(I_m-Q)P-\theta^0-\theta^\infty_1\}\right),\\
E_2&=
\begin{pmatrix}
0 & 0 \\
0 & 1
\end{pmatrix}, \quad
\Theta=
\begin{pmatrix}
\theta^\infty_2 I_{m-1} & \\
 & \theta^\infty_3
\end{pmatrix}.
\end{aligned}
\end{equation}
Furthermore, the matrix $\mathcal{B}_1$ is given by
\begin{align}
\mathcal{B}_1&=
(I_m \oplus U)^{-1}
\begin{pmatrix}
O_m & \frac{[\hat{\mathcal{A}}_0+\hat{\mathcal{A}}_1]_{12}}{t}\\
\frac{[\hat{\mathcal{A}}_0+\hat{\mathcal{A}}_1]_{21}}{t} & O_m
\end{pmatrix}
(I_m \oplus U),
\end{align}
where $[\hat{\mathcal{A}}_0+\hat{\mathcal{A}}_1]_{ij}$
is the $(i,j)$-block of the matrix $\hat{\mathcal{A}}_0+\hat{\mathcal{A}}_1$.
The Fuchs-Hukuhara relation is written as
$m(\theta^0 +\theta^1 +\theta_1^\infty) +(m-1)\theta_2^\infty+\theta_3^\infty =0$.
$P$ and $Q$ satisfy $[P, Q]=(\theta^0+\theta^1+\theta^\infty_1)I_m+\Theta$.
The system in $x$-direction of \eqref{eq:Lax_matPV} is characterized by the spectral type $(m)(m-1\, 1),mm,mm$.

The compatibility condition (in other words, isomonodromic deformation equation) for \eqref{eq:Lax_matPV} has two descriptions,
which are mutually equivalent.
One is the Hamiltonian form and the other is the ``non-abelian'' form.
The Hamiltonian is given by
\begin{equation}
tH^{\mathrm{Mat}, m}_{\mathrm{V}}
\left({-\theta^0-\theta^1-\theta^\infty_1,\theta^0-\theta^1 \atop \theta^1, \theta^0+\theta^1+\theta^\infty_1+\theta^\infty_2};
t; Q, P \right)=
\mathrm{tr}[P(P+t)Q(Q-1)+(\theta^0-\theta^1)PQ+\theta^1 P+(\theta^0+\theta^\infty_1)tQ].
\end{equation}
Then the compatibility condition can be written as follows:
\begin{align}
\frac{dq_{ij}}{dt}=\frac{\partial H^{\mathrm{Mat}, m}_{\mathrm{V}}}{\partial p_{ji}}, \quad
\frac{dp_{ij}}{dt}=-\frac{\partial H^{\mathrm{Mat}, m}_{\mathrm{V}}}{\partial q_{ji}}.
\end{align}

On the other hand, the non-abelian description is given as follows~\cite{K5}:
\begin{equation}\label{eq:non-abelian_mPV}
\left\{
\begin{aligned}
t\frac{dQ}{dt}&=Q(Q-1)(P+t)+PQ(Q-1)+(\theta^0-\theta^1)Q+\theta^1,\\
t\frac{dP}{dt}&=-(Q-1)P(P+t)-P(P+t)Q-(\theta^0-\theta^1)P-(\theta^0+\theta^\infty_1)t.
\end{aligned}
\right.
\end{equation}


\begin{thebibliography}{}


\bibitem{B2}
P. Boalch,
Simply-laced isomonodromy systems,
{\it Publ. Math. Inst. Hautes \'Etudes Sci.}\ \textbf{116}, No. 1 (2012), 1--68.

\bibitem{FS1}
K. Fuji and T. Suzuki,
Drinfeld-Sokolov hierarchies of type $A$ and fourth order Painlev\'{e} systems,
{\it Funkcial. Ekvac.}~{\bf 53} (2010), 143--167. 

\bibitem{G}
R. Garnier,
Sur des \'equations diff\'erentielles du troisi\`eme ordre
dont l'int\'egrale g\'en\'erale est uniforme 
et sur une classe d'\'equations nouvelles d'ordre sup\'erieur
dont l'int\'egrale g\'en\'erale a ses points critiques fixes,
\textit{Ann. Sci. \'Ec. Norm. Sup\'er.}~{\bf 29} (1912), 1--126.

\bibitem{HKNS}
K. Hiroe, H. Kawakami, A. Nakamura, and H. Sakai,
4-dimensional Painlev\'e-type equations,
\textit{MSJ Memoirs}\ \textbf{37} (2018).

\bibitem{HSS}
C. Hardouin, J. Sauloy, and M. F. Singer,
Galois theories of linear difference equations: an introduction,
Mathematical Surveys and Monographs Volume \textbf{211}, \textit{American Mathematical Society} (2016).

\bibitem{JS}
M. Jimbo and H. Sakai,
A $q$-analog of the sixth Painlev\'e equation,
\textit{Lett. Math. Phys.}~\textbf{38} (1996), 145--154.

\bibitem{Katz1995}
N. M. Katz,
Rigid local systems,
Annals of Mathematics Studies \textbf{139}, \textit{Princeton University Press} (1995).

\bibitem{K2}
H. Kawakami,
Matrix Painlev\'e systems,
{\it J. Math. Phys.}\ {\bf 56} (2015), doi.org/10.1063/1.4914369.

\bibitem{K3}
H. Kawakami,
Four-dimensional Painlev\'e-type equations associated with ramified linear equations III: Garnier systems and FS systems,
\textit{SIGMA}\ \textbf{13} (2017), 096, 50 pages.

\bibitem{K4}
H. Kawakami,
Four-dimensional Painlev\'e-type equations associated with ramified linear equations II: Sasano systems,
\textit{Journal of Integrable Systems}, Volume 3, Issue 1 (2018), xyy013.

\bibitem{K5}
H. Kawakami,
Four-dimensional Painlev\'e-type equations associated with ramified linear equations I: Matrix Painlev\'e systems,
\textit{Funkcial. Ekvac.}~\textbf{63} (2020), 97--132.

\bibitem{K6}
H. Kawakami,
A $q$-analogue of the matrix sixth Painlev\'e system,
\textit{J. Phys. A: Math. Theor.}~\textbf{53} (2020).

\bibitem{K7}
H. Kawakami,
Four-dimensional Painlev\'e-type difference equations, arXiv:1802.00116.

\bibitem{Ma}
T. Masuda,
A $q$-analogue of the higher order Painlev\'e type equations with the affine Weyl group symmetry of type $D$,
\textit{Funkcial. Ekvac.}~\textbf{58} (2015), 405--430.

\bibitem{Os}
T. Oshima,
Fractional calculus of Weyl algebra and Fuchsian differential equations,
\textit{MSJ Memoirs}~\textbf{28} (2012).

\bibitem{Sak2001}
H. Sakai,
Rational surfaces associated with affine root systems and geometry of the Painlev\'e equations,
{\it Comm. Math. Phys.}~\textbf{220} (2001), 165--229.

\bibitem{Sak2005}
H. Sakai,
A $q$-analog of the Garnier system,
\textit{Funkcial. Ekvac.}~\textbf{48} (2005), 273--297.

\bibitem{SY}
H. Sakai and M. Yamaguchi,
Spectral types of linear $q$-difference equations and $q$-analog of middle convolution,
\textit{Int. Math. Res. Not.}, Volume~\textbf{2017}, Issue 7 (2017), 1975--2013.

\bibitem{Ss}
Y. Sasano,
Coupled Painleve VI systems in dimension four with affine Weyl group symmetry of type $D\sp {(1)}\sb 6$. II,
{\it  RIMS K$\hat{o}$ky$\hat{u}$roku Bessatsu}~{\bf B5} 
(2008), 137--152.

\bibitem{Sz}
T. Suzuki,
A $q$-analogue of the Drinfeld-Sokolov hierarchy of type $A$ and $q$-Painlev\'e system,
\textit{AMS Contemp. Math.}~\textbf{651} (2015), 25--38.

\bibitem{Ts2010}
T. Tsuda,
On an Integrable System of $q$-Difference Equations Satisfied by the Universal Characters: Its Lax Formalism and an Application to $q$-Painlev\'e Equations,
\textit{Comm. Math. Phys.}~\textbf{293} (2010), 347--359.

\bibitem{Ts2014}
T. Tsuda,
UC hierarchy and monodromy preserving deformation,
{\it J. Reine Angew. Math.}~\textbf{690} (2014), 1--34.

\end{thebibliography}
\end{document}